\definecolor{NiColor}{RGB}{77,77,255}
\definecolor{NiColoRed}{RGB}{255,77,77}
\definecolor{NiCitation}{RGB}{0,181,26}
\newtheoremstyle{TheoremStyle}
{3pt}
{3pt}
{}
{}
{\sc}
{:}
{.5em}
{}
\def\@endtheorem{\hfill$\lozenge$} 
\theoremstyle{TheoremStyle}
\newtheorem{theorem}{Theorem}
\newtheorem{proposition}[theorem]{Proposition}
\newtheorem{remark}[theorem]{Remark}
\title{DLR-KMS correspondence on lattice spin systems}
\author[a]{N. Drago\thanks{\href{mailto:nicolo.drago@unitn.it}{nicolo.drago@unitn.it}}}
\author[b]{C. J. F. van de Ven\thanks{\href{mailto:christiaan.vandeven@mathematik.uni-wuerzburg.de}{christiaan.vandeven@mathematik.uni-wuerzburg.de}}}
\affil[a]{Dipartimento di Matematica, Universit\`a di Trento and INFN-TIFPA and INdAM, Via Sommarive 14, I-38123 Povo, Italy}
\affil[b]{Julius Maximilian University of W\"urzburg, Department of Mathematics Chair of Mathematics X (Mathematical Physics), Emil-Fischer-Stra\ss e 31, 97074 W\"urzburg, Germany}
\def\bearray{\begin{eqnarray}}
\def\earray{\end{eqnarray}}
\def\beq{\begin{equation}}
\def\eeq{\end{equation}}
\def\b0{{\bf 0}}
\begin{document} 

\maketitle

\begin{abstract}
    The Dobrushin-Lanford-Ruelle (DLR) condition \cite{Dobrushin_1970,Lanford_Ruelle_1969} and the classical Kubo-Martin-Schwinger (KMS) condition \cite{Gallavotti_Verboven_1975} are considered in the context of classical lattice systems.
    In particular, we prove that these conditions are equivalent for the case of a lattice spin system with values in a compact symplectic manifold by showing that infinite volume Gibbs states are in bijection with KMS states.
\end{abstract}


\section{Introduction}

Identifying the proper notion of thermal equilibrium in an infinite system is of paramount importance in statistical mechanics \cite{Ruelle_1969}.
A common approach to this problem is to consider a finite size approximation of the system under consideration, where the notion of thermal equilibrium is captured by the renown Gibbs state \cite{Friedly_Velenik_2017}, \textit{cf.} Equation \eqref{Eq: Gibbs measure on Lambda with eta boundary conditions}.
A subsequent limit procedure, where the size of the system diverges, leads to states which are identified (a posteriori) as those describing thermal equilibrium in the infinite system.
While being a convenient and fruitful approach, this method leaves open the problem of identifying thermal equilibrium states directly on the infinite system.
Such possibility is essential because it leads to a proper notion of phase transitions, independently on the chosen order parameters \cite[\S 3]{Friedly_Velenik_2017}.

For lattice spin systems, the Dobrushin-Lanford-Ruelle (DLR) condition provides a satisfactory solution to the stated problem \cite{Dobrushin_1968_1,Dobrushin_1968_2,Dobrushin_1968_3,Dobrushin_1970,Lanford_Ruelle_1969,Ruelle_1967}.
Therein, thermal equilibrium states are identified with probability measures which reduce to the Gibbs state when conditioned on the complement of a finite region, see Section \ref{Sec: DLR approach on continuous lattice system} for the precise definition.
States abiding by the DLR condition are usually called infinite-volume Gibbs states.
The main ingredients of this approach are the lattice formulation of the system of interest together with the notion of Gibbs states on a local region of such lattice \cite{Georgii_2011}.

Another conceptually clear yet different in spirit condition to pinpoint thermal equilibrium states is the renown Kubo-Martin-Schwinger (KMS) condition.
The quantum version of this condition, usually formulated in the $C^*$-algebraic setting, identifies thermal equilibrium in terms of an analytic condition on the correlations of the state \cite{Bratteli_Robinson_1981_II, Haag_Hugenholtz_Winnik_1967}.
A classical version of this condition is at disposal and has been investigated in different contexts ranging from systems of infinitely many particles on $\mathbb{R}^d$ \cite{Aizenman_Gallavotti_Goldstein_Lebowitz_1976,Aizenman_Goldstein_Gruber_Lebowitz_Martin_1977,Gallavotti_Pulvirenti_1976,Gallavotti_Verboven_1975} to Poisson geometry \cite{Basart_Flato_Lichnerowicz_Sternheimer_1984,Bordermann_Romer_Waldmann_1998,Drago_Waldmann_2021}.
The classical KMS condition fixes the expectation value of the state of interest on Poisson brackets, \textit{cf.} Section \ref{Sec: Classical KMS condition on lattice systems} for the precise definition.
Thus, it requires a Poisson structure on the system under investigation.

At this juncture a natural question arises, namely if the DLR and the (classical) KMS conditions agree whenever the system under investigation is described in terms of a lattice structure endowed with a Poisson bracket.
A positive answer for systems of particles on $\mathbb{R}^d$ has been found in \cite{Aizenman_Goldstein_Gruber_Lebowitz_Martin_1977} and similar positive results have been obtained on the quantum side \cite[Thm. 6.2.18]{Bratteli_Robinson_1981_II}.

In this paper we deal with the DLR-KMS correspondence for the case of a lattice system whose configuration space $\Omega=M^{\mathbb{Z}^d}$ is the space of functions $\mathbb{Z}^d\to M$ with values in a compact symplectic manifold $M$.
This provides a natural setting where both the DLR and KMS approaches apply, leading to the natural question of whether the DLR-KMS correspondence holds true also in this setting.
The main result of this paper is the following theorem, which provides a positive answer to this question.
\begin{theorem}\label{Thm: equivalence of oo-volume Gibbs states and KMS states}
	Let $\mathcal{G}(\Phi)$ be the convex set of infinite volume Gibbs states on $\Omega=M^{\mathbb{Z}^d}$ ---\textit{cf.} Section \ref{Sec: Classical KMS condition on lattice systems} and Equation \eqref{Eq: oo-volume Gibbs state property}.
	Let $\mathcal{K}(\Phi)$ be the convex set of $X^\Phi$-KMS states defined as per Equation \eqref{Eq: KMS condition}, \textit{cf.} Section \ref{Sec: Classical KMS condition on lattice systems}.
	Then $\mathcal{G}(\Phi)=\mathcal{K}(\Phi)$.
\end{theorem}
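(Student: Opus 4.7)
The plan is to establish the two inclusions $\mathcal{G}(\Phi)\subseteq\mathcal{K}(\Phi)$ and $\mathcal{K}(\Phi)\subseteq\mathcal{G}(\Phi)$ separately, reducing both directions to an integration-by-parts argument on the compact finite-dimensional symplectic manifold $M^\Lambda$ for a finite region $\Lambda\subset\mathbb{Z}^d$. The crucial preliminary remark is that, for any smooth local observable $g$ supported in $\Lambda$, the derivation $X^\Phi(g)$ collapses to the finite Poisson bracket $\{H_\Lambda^{\omega_{\Lambda^c}},g\}$, where $H_\Lambda^{\omega_{\Lambda^c}}=\sum_{A\cap\Lambda\ne\emptyset}\Phi_A(\cdot,\omega_{\Lambda^c})$ is the local Hamiltonian with boundary condition $\omega_{\Lambda^c}$, since Poisson brackets act site-wise on $\Omega=M^{\mathbb{Z}^d}$ and all terms $\Phi_A$ with $A\cap\Lambda=\emptyset$ are annihilated.

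For $\mathcal{G}(\Phi)\subseteq\mathcal{K}(\Phi)$, fix $\mu\in\mathcal{G}(\Phi)$ and smooth local $f,g$ supported in $\Lambda$. The DLR property exhibits $\mu(\{f,g\})$ and $\mu(fX^\Phi(g))$ as averages over $\omega_{\Lambda^c}$ of integrals against the Gibbs specification $\gamma_\Lambda(d\omega_\Lambda|\omega_{\Lambda^c})\propto e^{-\beta H_\Lambda^{\omega_{\Lambda^c}}}d\lambda_M^{\otimes\Lambda}$. Since $M^\Lambda$ is compact without boundary, Liouville's identity $\int_{M^\Lambda}\{F,g\}\,d\lambda_M^{\otimes\Lambda}=0$ applied to $F=f\,e^{-\beta H_\Lambda^{\omega_{\Lambda^c}}}$ yields the pointwise-in-$\omega_{\Lambda^c}$ identity
\begin{equation*}
\int_{M^\Lambda}\{f,g\}\,d\gamma_\Lambda(\cdot|\omega_{\Lambda^c})=\beta\int_{M^\Lambda}f\,\{H_\Lambda^{\omega_{\Lambda^c}},g\}\,d\gamma_\Lambda(\cdot|\omega_{\Lambda^c}),
\end{equation*}
and integrating against $\mu$ in the boundary configuration produces the KMS identity.

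For the reverse inclusion, take $\mu\in\mathcal{K}(\Phi)$ and disintegrate it along the $\sigma$-algebra generated by the coordinates on $\Lambda^c$, obtaining regular conditional probabilities $\mu^{\omega_{\Lambda^c}}$ on $M^\Lambda$. Specialising the global KMS identity to all smooth local $f,g$ supported in $\Lambda$ and applying Fubini gives
\begin{equation*}
\int\Bigl[\mu^{\omega_{\Lambda^c}}(\{f,g\})-\beta\mu^{\omega_{\Lambda^c}}\bigl(f\{H_\Lambda^{\omega_{\Lambda^c}},g\}\bigr)\Bigr]\,\mu(d\omega_{\Lambda^c})=0,
\end{equation*}
and a countable dense family of test pairs $(f,g)$ in the smooth local observables on $M^\Lambda$ forces $\mu^{\omega_{\Lambda^c}}$ to be a classical KMS state on $M^\Lambda$ for the Hamiltonian $H_\Lambda^{\omega_{\Lambda^c}}$, at $\mu$-a.e.\ $\omega_{\Lambda^c}$. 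The DLR equality $\mu^{\omega_{\Lambda^c}}=\gamma_\Lambda(\cdot|\omega_{\Lambda^c})$ is then reduced to a purely finite-dimensional uniqueness statement.

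The main obstacle is precisely this finite-dimensional lemma: every KMS state $\nu$ on a compact connected symplectic manifold $N$ for a smooth Hamiltonian $H$ coincides with the Gibbs measure $Z^{-1}e^{-\beta H}\,d\lambda_N$. To prove it, I would work in Darboux coordinates $(q^i,p_i)$ and test the KMS relation against $g=p_i$ and $g=q^j$ (cut off by bumps), obtaining the distributional system $\partial_{q^i}\nu=-\beta(\partial_{q^i}H)\nu$ and $\partial_{p_i}\nu=-\beta(\partial_{p_i}H)\nu$. Multiplication by the smooth positive factor $e^{\beta H}$ converts these to $\partial_{q^i}(e^{\beta H}\nu)=\partial_{p_i}(e^{\beta H}\nu)=0$ on every Darboux chart, so that $e^{\beta H}\nu$ is locally a scalar multiple of Lebesgue measure; patching over a connected atlas and normalising to a probability yields the Gibbs form. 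Applied with $N=M^\Lambda$ and $H=H_\Lambda^{\omega_{\Lambda^c}}$, this identifies $\mu^{\omega_{\Lambda^c}}$ with $\gamma_\Lambda(\cdot|\omega_{\Lambda^c})$, which is the DLR condition at volume $\Lambda$, and since $\Lambda$ was arbitrary, $\mu\in\mathcal{G}(\Phi)$.
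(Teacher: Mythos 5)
Your overall architecture coincides with the paper's: the inclusion $\mathcal{G}(\Phi)\subseteq\mathcal{K}(\Phi)$ is the same integration-by-parts computation against the Gibbs specification, and the converse rests on the same finite-dimensional uniqueness input (a state on a compact connected symplectic manifold annihilating all Poisson brackets is the Liouville state, proved in Darboux charts; multiplying by $e^{\beta H}$ reduces your KMS version of the lemma to this invariance version, exactly as the paper does via $\psi_\Lambda(f):=\varphi_\Lambda(fe^{H^\Phi_\Lambda}|\eta)$). There is, however, one genuine gap in your converse inclusion. Specialising the KMS identity to $f,g$ supported in $\Lambda$ and using the disintegration gives only
\begin{equation*}
\int_\Omega\Bigl[\mu^{\omega_{\Lambda^c}}(\{f,g\})-\beta\,\mu^{\omega_{\Lambda^c}}\bigl(f\{H_\Lambda^{\omega_{\Lambda^c}},g\}\bigr)\Bigr]\,\mu(\mathrm{d}\omega_{\Lambda^c})=0\,,
\end{equation*}
i.e.\ the vanishing of the \emph{integral} of a signed function of the boundary configuration. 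This does not force the integrand to vanish $\mu$-almost everywhere, and your countable dense family of test pairs only serves to control the union of null sets \emph{after} the pointwise a.e.\ statement is available for each fixed pair $(f,g)$ --- which is precisely the step that is missing.

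The repair is the localisation trick used in the paper: for $h\in C^\infty(\Omega_{\Lambda'})$ with $\Lambda'\Subset\Lambda^c$ one has $\{f,gh\}=h\{f,g\}$ (since $\{f,h\}=0$) while $X^\Phi(f)$ is unchanged, so applying the KMS condition to the pair $(f,gh)$ and using that $h$ is $\mathcal{F}_{\Lambda^c}$-measurable yields
\begin{equation*}
\int_\Omega\Bigl[\mu^{\omega_{\Lambda^c}}(\{f,g\})-\beta\,\mu^{\omega_{\Lambda^c}}\bigl(f\{H_\Lambda^{\omega_{\Lambda^c}},g\}\bigr)\Bigr]h(\omega_{\Lambda^c})\,\mu(\mathrm{d}\omega_{\Lambda^c})=0
\end{equation*}
for all such $h$. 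By density of these $h$ in $C(\Omega_{\Lambda^c})$ the bracketed quantity vanishes off a $\mu$-null set $N_{f,g}$, and only then does separability of $C^\infty(\Omega_\Lambda)$ let you pick a single null set for all pairs. With this step inserted, your argument reproduces the paper's proof; the remainder (concentration of the conditional measures on the fibres $\pi_{\Lambda^c}^{-1}(\eta_{\Lambda^c})$ and the Darboux-chart uniqueness lemma) is sound.
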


Aside from its relevance in the discussion on the notion of thermal equilibrium, Theorem \ref{Thm: equivalence of oo-volume Gibbs states and KMS states} is relevant for its connection with strict deformation quantization (SDQ) \cite{Bayen_Flato_Fronsdal_Lichnerowicz_1978_I,Bayen_Flato_Fronsdal_Lichnerowicz_1978_II,Berezin_1975}.
The latter provides a mathematically sound description of the classical limit of quantum theories and it is particularly suitable for the analysis of the semi-classical limit of quantum states.
In view of the recent results in this framework \cite{Drago_vandeVen_2022,Moretti_vandeVen_2022,Murro-vandeVen_2022,vandeVen_2022,vandeVen_2022_b}, Theorem \ref{Thm: equivalence of oo-volume Gibbs states and KMS states} stands as a technical result which may lead to a deeper understanding of the quantum-to-classical limit of thermal equilibrium states.
We plan to address this question in a future investigation.

From a technical point of view this paper profits from the strategy applied in \cite{Aizenman_Goldstein_Gruber_Lebowitz_Martin_1977} together with the results obtained in \cite{Bordermann_Romer_Waldmann_1998, Drago_Waldmann_2021}.
As a matter of fact the proof of $\mathcal{G}(\Phi)\subseteq\mathcal{K}(\Phi)$ is a direct computation, \textit{cf.} Proposition \ref{Prop: oo-volume Gibbs states are KMS states}.
Instead, the converse inclusion requires a more refined analysis of the conditional probability $\varphi_\Lambda(\;|\eta)$ of a given KMS state $\varphi$ with respect to the complement $\Lambda^c$ of a finite region $\Lambda\Subset\mathbb{Z}^d$.
This is where the results of \cite{Bordermann_Romer_Waldmann_1998,Drago_Waldmann_2021} apply, ensuring that such conditional probabilities $\varphi_\Lambda(\;|\eta)$ coincide with the local Gibbs state $\varphi^\Phi_\Lambda(\;|\eta)$.

The paper is organized as follows:
Section \ref{Sec: DLR approach on continuous lattice system} is devoted to a brief introduction to the lattice spin system of interest together with the precise definition of the DLR condition.
Similarly, Section \ref{Sec: Classical KMS condition on lattice systems} considers the classical KMS condition in the same setting.
Finally Section \ref{Sec: Proof of the classical DLR-KMS correspondence} proves Theorem \ref{Thm: equivalence of oo-volume Gibbs states and KMS states}, \textit{cf.} Proposition \ref{Prop: oo-volume Gibbs states are KMS states}-\ref{Prop: KMS states are oo-volume Gibbs states}.

\section{DLR approach on continuous lattice system}
\label{Sec: DLR approach on continuous lattice system}

In this section we briefly recollect a few crucial results and definitions from continuous lattice systems, see \cite{Friedly_Velenik_2017}.
In what follows we will consider a configuration space over the lattice $\mathbb{Z}^d$ with values in a symplectic manifold $M$.
The differential structure of $M$ is needed for the formulation of the classical KMS condition, \textit{cf.} Section \ref{Sec: Classical KMS condition on lattice systems}.

Let $(M,\varsigma)$ be a compact, connected symplectic manifold with symplectic form $\varsigma$.
For technical convenience we shall assume that $M$ is metrizable and denote with $d_M$ a complete metric on $M$ whose induced topology coincides with the one of $M$.
We will denote by $\mu_M:=c_M\varsigma^m/m!$ the induced volume form on $M$, where $\dim M=2m$ while the normalization constant $c_M>0$ is chosen so that $\int_M\mu_M=1$.
For $f,g\in C^\infty(M)$ we denote by $\{f,g\}$ the induced Poisson bracket between $f$ and $g$ and set $X_f:=\{\;,f\}$ the Hamiltonian vector field associated with $f$.

The configuration space of interest is $\Omega:=M^{\mathbb{Z}^d}$ which is compact in the product topology.
Moreover, $\Omega$ is also metrizable once we set
\begin{align*}
    d_\Omega(\omega,\eta):=\sum_{i\in\mathbb{Z}^d}2^{-\|i\|_\infty}\frac{d_M(\omega_i,\eta_i)}{1+d_M(\omega_i,\eta_i)}\,,
\end{align*}
where $\|i\|_\infty:=\sup\limits_{k\in\{1,\ldots,d\}}|i_k|$.

If $\Lambda\subset\mathbb{Z}^d$ we set $\Omega_\Lambda:=M^\Lambda$ and denote by \begin{align*}
	\pi_\Lambda\colon\Omega\to\Omega_\Lambda
	\qquad \pi_\Lambda(\omega):=\omega_\Lambda:=\omega|_\Lambda\,,
\end{align*}
the projection over $\Omega_\Lambda$.
With a standard slight abuse of notation we will set, given $\omega_\Lambda\in\Omega_\Lambda$ and $\omega_{\Lambda^c}\in\Omega_{\Lambda^c}$, $\Lambda\subset\mathbb{Z}^d$, $\omega_\Lambda\omega_{\Lambda^c}:=\omega_\Lambda\otimes\omega_{\Lambda^c}\in\Omega$.
For a given $\Lambda\subset\mathbb{Z}^d$ we will write $\Lambda\Subset\mathbb{Z}^d$ if $|\Lambda|<+\infty$: Notice that in this latter case $C(\Omega_\Lambda)\simeq C(M^{|\Lambda|})$ where $M^{|\Lambda|}$ is again a connected compact symplectic manifold with symplectic form $\varsigma_\Lambda:=\oplus_{i\in\Lambda}\varsigma$.

The algebra of observables over the configuration space $\Omega$ is identified with the commutative $C^*$-algebra $C(\Omega)$ equipped with the supremum norm $\|f\|_\infty:=\sup_{\omega\in\Omega}|f(\omega)|$.
This space is best described in terms of (continuous) local functions.
In particular, $f\in C(\Omega)$ is called \textbf{(continuous) local function}, denoted $f\in C_{\textsc{loc}}(\Omega)$, if there exists $\Lambda\Subset\mathbb{Z}^d$ and $f_\Lambda\in C(\Omega_\Lambda)\simeq C(M^{|\Lambda|})$ such that $f(\omega)=f_\Lambda(\omega_\Lambda)$ for all $\omega\in\Omega$.
Then, for all $f\in C(\Omega)$ there exists a sequence $(f_n)_{n\in\mathbb{N}}$ of local functions such that $\lim_{n\to\infty}\|f-f_n\|_\infty=0$ \cite[Lem. 6.21]{Friedly_Velenik_2017} (\textit{e.g.} one considers $\eta\in\Omega$ and a sequence $(\Lambda_n)_{n\in\mathbb{N}}$ of increasing subsets $\Lambda_n\subset\Lambda_{n+1}\Subset\mathbb{Z}^d$ such that $\cup_n\Lambda_n=\Omega$ and set $f_n(\omega):=f(\omega_{\Lambda_n}\eta_{\Lambda_n^c})$ for all $n\in\mathbb{N}$).
For this reason, $C(\Omega)$ is usually referred to as the $C^*$-algebra of \textbf{(continuous) quasi-local functions}.

Per definition, a \textbf{state} on $C(\Omega)$ is a normalized linear, positive functional $\varphi\colon C(\Omega)\to\mathbb{C}$.
By the Riesz-Markov-Kakutani theorem every state $\varphi$ on $C(\Omega)$ is completely described by a Radon probability measure $\mu\in\mathcal{P}(\Omega,\mathcal{F})$ on $(\Omega,\mathcal{F})$ ---here $\mathcal{F}$ denotes the Borel $\sigma$-algebra over $\Omega$.
In particular $\varphi\equiv\varphi_\mu$ and
\begin{align}
    \varphi_\mu(f)=\int_\Omega f\mathrm{d}\mu
    \qquad\forall f\in C(\Omega)\,.
\end{align}

For later convenience we recall the definition of the $\sigma$-algebra $\mathcal{F}_\Lambda$ over $\Omega$ of events occurring in $\Lambda\subset\mathbb{Z}^d$:
\begin{align*}
	\mathcal{F}_\Lambda:=
	\sigma\bigg(\bigcup_{\substack{\Lambda'\Subset\mathbb{Z}^d\\\Lambda'\subset\Lambda}}\{\pi_{\Lambda'}^{-1}(E)\,|\,E\in\mathcal{B}_{\Lambda'}\}\bigg)
	\,,
\end{align*}
where $\sigma(A)$ is the $\sigma$-algebra generated by the collection $A$ while, if $\Lambda\Subset\mathbb{Z}^d$, $\mathcal{B}_\Lambda$ denotes the Borel $\sigma$-algebra over $\otimes_{i\in\Lambda} M\simeq M^{|\Lambda|}$.
Notice that
\begin{align*}
	\mathcal{F}_\Lambda
	=\{\pi_{\Lambda}^{-1}(E)\,|\,E\in\mathcal{B}_{\Lambda}\}
	\qquad\forall \Lambda\Subset\mathbb{Z}^d\,,
\end{align*}
while $\mathcal{F}_{\mathbb{Z}^d}$ coincides with the Borel $\sigma$-algebra $\mathcal{F}$ over $\Omega$.
A function $f\colon\Omega\to\mathbb{C}$ is $\mathcal{F}_\Lambda$-measurable if and only if $f$ is $\Lambda\textbf{-local}$, namely there exists a measurable function $f_\Lambda\colon\Omega_\Lambda\to\mathbb{C}$ such that $f(\omega)=f_\Lambda(\omega_\Lambda)$ for all $\omega\in\Omega$ \cite[Lem. 6.3]{Friedly_Velenik_2017}.

Among all possible states $\varphi_\mu$ on $C(\Omega)$ we will be interested in the infinite volume Gibbs states.
The definition of the latter requires to introduce a \textbf{potential}, namely a collection $\Phi=\{\Phi_\Lambda\}_{\Lambda\Subset\mathbb{Z}^d}$, where $\Phi_\Lambda\in C(\Omega)$ is $\mathcal{F}_\Lambda$-measurable for all $\Lambda\Subset\mathbb{Z}^d$.
In what follows we will assume the following technical assumptions on $\Phi$:
\begin{enumerate}[(I)]
	\item\label{Item: Phi is C1}
	For all $\Lambda\Subset\mathbb{Z}^d$, $\Phi_\Lambda\in C^1(\Omega_\Lambda)\simeq C^1(M^{|\Lambda|})$.
	\item\label{Item: Phi is C1 summable}
	For all $i\in\mathbb{Z}^d$
	\begin{align}\label{Eq: Phi is C1 summable}
		\sum_{\substack{\Lambda\Subset\mathbb{Z}^d\\i\in\Lambda}}\|\Phi_\Lambda\|_{C^1(\Omega_\Lambda)}<+\infty\,,
	\end{align}
	where $\|\Phi_\Lambda\|_{C^1(M_\Lambda)}$ denotes the supremum of $\Phi_\Lambda$ and $\mathrm{d}\Phi_\Lambda$ over $\Omega_\Lambda$, thus it coincides with the norm of the $C^*$-algebra $C^1(M^{|\Lambda|})$.
\end{enumerate}
Within these assumptions we can define
\begin{align}\label{Eq: local Hamiltonian}
	H_\Lambda^\Phi
	:=\sum_{\substack{\Lambda'\Subset\mathbb{Z}^d\\\Lambda'\cap\Lambda\neq\emptyset}}\Phi_{\Lambda'}\,,
\end{align}
which is a well-defined element in $C(\Omega)$ on account of assumption \eqref{Item: Phi is C1 summable}.
For a fixed $\eta\in\Omega$ one then introduces the probability measure $\mu^\Phi_\Lambda(\;|\eta)$ on $(\Omega,\mathcal{F})$ defined by
\begin{align}\label{Eq: Gibbs measure on Lambda with eta boundary conditions}
	\mu^\Phi_\Lambda(A|\eta)
	:=\frac{1}{Z^\Phi_{\eta,\Lambda}}\int_{\Omega_\Lambda} 1_A(\omega_\Lambda\eta_{\Lambda^c})
	e^{-H^\Phi_\Lambda(\omega_\Lambda\eta_{\Lambda^c})}\mathrm{d}\mu_{M^{|\Lambda|}}(\omega_\Lambda)
	\quad\forall A\in\mathcal{F}\,,
\end{align}
where $\mu_{M^{|\Lambda|}}:=\otimes_{i\in\Lambda}\mu_M$ denotes the (normalized) Liouville volume form over $M^{|\Lambda|}$ while $Z^\Phi_{\eta,\Lambda}$ is a normalization constant ---we have implicitly identified $\Omega_\Lambda\simeq M^{|\Lambda|}$.
We will refer to $\mu^\Phi_\Lambda(\;|\eta)$ as the \textbf{$\Lambda$-Gibbs measure} associated with the potential $\Phi$ and with boundary condition $\eta$ (at fixed inverse temperature $\beta=1$).
We will denote by $\varphi^\Phi_\Lambda(\;|\eta)$ the associated state on $C(\Omega)$ which will be referred to as the $\Lambda$-\textbf{Gibbs state} associated with $\Phi$ and $\eta$.

\begin{remark}\label{Rmk: Phi C1 assumption}
	The assumption on the potential $\Phi$ are slightly stronger than those usually imposed \cite[\S 6]{Friedly_Velenik_2017}.
	As a matter of fact, one usually requires
	\begin{align*}
		\sum_{\substack{\Lambda\Subset\mathbb{Z}^d\\i\in\Lambda}}
		\|\Phi_\Lambda\|_{C(\Omega_\Lambda)}<+\infty\,,
	\end{align*}
	which ensures the well-definiteness of $H^\Phi_\Lambda$ as per Equation \eqref{Eq: local Hamiltonian}.
	The stronger assumptions \eqref{Item: Phi is C1}-\eqref{Item: Phi is C1 summable} are necessary for discussing the Poisson bracket on $\Omega$ ---\textit{cf.} Section \ref{Sec: Classical KMS condition on lattice systems}.
\end{remark}

The collection $\mu^\Phi:=\{\mu^\Phi_\Lambda\}_{\Lambda\Subset\mathbb{Z}^d}$ is called \textbf{Gibbs specification} and enjoys the following properties:
\begin{enumerate}[(a)]
	\item\label{Item: probability measure for fixed eta}
	For all $\Lambda\Subset\mathbb{Z}^d$ and $\eta\in\Omega$, $\mu^\Phi_\Lambda(\;|\eta)$ is a probability measure over $(\Omega,\mathcal{F})$;
	
	\item\label{Item: measurability for fixed measurable set}
	For all $\Lambda\Subset\mathbb{Z}^d$ and $A\in\mathcal{F}$, the function $\Omega\ni\eta\mapsto\mu^\Phi_\Lambda(A|\eta)$ is $\mathcal{F}_{\Lambda^c}$-measurable.
	
	\item[]
	Properties \eqref{Item: probability measure for fixed eta}-\eqref{Item: measurability for fixed measurable set} can be summarized by saying that $\mu^\Phi_\Lambda$ is a \textbf{probability kernel} from $\mathcal{F}_{\Lambda^c}$ to $\mathcal{F}$ \cite[\S 5]{Kallenberg_2021} (also known as transition measures \cite[Def. 10.7.1]{Bogachev_2007}).

	\item\label{Item: Gibbs kernel is proper}
	For all $\Lambda\Subset\mathbb{Z}^d$ and $A\in\mathcal{F}_{\Lambda^c}$, it holds $\mu^\Phi_\Lambda(A|\eta)=1_A(\eta)$.
	Probability kernels abiding by this assumption are called \textbf{proper}.
	
	\item\label{Item: Gibbs compatibility condition}
	The family of probability kernels $\mu^\Phi$ is \textbf{compatible}, namely, for all $\Lambda_1\subset\Lambda_2\Subset\mathbb{Z}^d$ it holds $\mu^\Phi_{\Lambda_2}\mu^\Phi_{\Lambda_1}=\mu^\Phi_{\Lambda_2}$, where
	\begin{align*}
		\mu^\Phi_{\Lambda_2}\mu^\Phi_{\Lambda_1}\colon
		\mathcal{F}\times\Omega
		\ni(A,\eta)\mapsto
		\int_\Omega\mu^\Phi_{\Lambda_1}(A|\omega)\mathrm{d}\mu^\Phi_{\Lambda_2}(\omega|\eta)\,.
	\end{align*}
\end{enumerate}

For any probability measure $\mu\in\mathcal{P}(\Omega,\mathcal{F})$ and $\Lambda\Subset\mathbb{Z}^d$ one may introduce a new probability measure $\mu\mu^\Phi_\Lambda\in\mathcal{P}(\Omega,\mathcal{F})$ defined by
\begin{align}\label{Eq: composition of probability measure and Gibbs kernel}
	\mu\mu^\Phi_\Lambda(A)
	:=\int_\Omega\mu^\Phi_\Lambda(A|\eta)\mathrm{d}\mu(\eta)\,.
\end{align}
A state $\varphi_\mu$ is called \textbf{infinite volume Gibbs state}, denoted $\varphi_\mu\in\mathcal{G}(\Phi)$, if the associated Radon probability measure $\mu$ fulfils
\begin{align}\label{Eq: oo-volume Gibbs state property}
	\mu\mu^\Phi_\Lambda=\mu\qquad\forall\Lambda\Subset\mathbb{Z}^d\,.
\end{align}
Any $\varphi_\mu\in\mathcal{G}(\Phi)$ is interpreted as a state in thermal equilibrium (at fixed inverse temperature $\beta=1$) with respect to the formal Hamiltonian $H_{\mathbb{Z}^d}:=\sum_{\Lambda\Subset\mathbb{Z}^d}\Phi_\Lambda$.
This condition allows to characterize thermal equilibrium states on the whole configuration space $\Omega$, without resorting to a finite configuration space $\Omega_\Lambda$, $\Lambda\Subset\mathbb{Z}^d$.
This approach provides a rather neat definition of phase transition: A \textbf{phase transition} occurs whenever $\mathcal{G}(\Phi)$ is not a singleton.
In the physical jargon this is equivalent to the existence of different equilibrium states in the infinite volume system.

The compatibility condition \eqref{Eq: oo-volume Gibbs state property} with the Gibbs specification reflect the property that locally, \textit{i.e.} on $\Omega_\Lambda$ for $\Lambda\Subset\mathbb{Z}^d$ with fixed boundary conditions on $\Lambda^c$, all equilibrium states coincide with the Gibbs measure \eqref{Eq: Gibbs measure on Lambda with eta boundary conditions}.
In physical terms, there are no phase transitions in a finite system.

As we shall see in the next section, in the current situation the symplectic structure on $M$ allows for a different characterization of thermal equilibrium states by means of the (classical) KMS condition \cite{Gallavotti_Pulvirenti_1976,Gallavotti_Verboven_1975}.
The goal of this paper is to prove that these two characterizations of thermal states coincide.

\begin{remark}\label{Rmk: existence of regular conditional probability}
    Condition \eqref{Eq: oo-volume Gibbs state property} can be interpreted by saying that, for all $\Lambda\Subset\mathbb{Z}^d$, $\mu^\Phi_\Lambda(\;|\eta)$ is a regular conditional measure for $\mu$ with respect to $\mathcal{F}_{\Lambda^c}$ \cite[Def. 10.4.1]{Bogachev_2007}.
    For later convenience we stress that any probability measure $\mu\in\mathcal{P}(\Omega,\mathcal{F})$ arising from a state $\varphi_\mu$ on $C(\Omega)$ has a \textbf{regular conditional measure} $\mu_\Lambda$ with respect to $\mathcal{F}_{\Lambda^c}$ for all $\Lambda\Subset\mathbb{Z}^d$.
    This follows from \cite[Cor. 10.4.7]{Bogachev_2007}, together with the observation that:
    (a) since $M$ is metrizable, $\Omega$ is second-countable (and also a \textbf{Polish space} because it is metrizable), therefore, $\mathcal{F}$ is countably generated;
    (b) since $\Omega$ is compact, the Riesz-Markov-Kakutani theorem ensures that the probability measure $\mu$ associated with any state $\varphi_\mu$ is Radon, thus it has an approximating compact class (made by compact subsets of $\Omega$).
    Thus, for all $\Lambda\Subset\mathbb{Z}^d$ there exists
    \begin{align*}
		\mu_\Lambda\colon\mathcal{F}\times\Omega
		\ni(A,\eta)\mapsto\mu_\Lambda(A|\eta)\in[0,1]\,,
    \end{align*}
    such that $\mu_\Lambda(\;|\eta)\in\mathcal{P}(\Omega,\mathcal{F})$ for $\mu$-almost all $\eta\in\Omega$ and $\mu_\Lambda(A|\;)$ is $\mathcal{F}_{\Lambda^c}$-measurable, moreover,
    \begin{align}\label{Eq: regular conditional measure condition}
        \mu(A\cap B)=\int_B \mu_\Lambda(A|\eta)\mathrm{d}\mu(\eta)
		\qquad
		\forall A\in\mathcal{F}\,,\,
		\forall B\in\mathcal{F}_{\Lambda^c}\,.
    \end{align}
    Thus $\varphi_\mu\in\mathcal{G}(\Phi)$ if and only if $\mu_\Lambda(\;|\eta)=\mu^\Phi_\Lambda(\;|\eta)$ for $\mu$-almost all $\eta\in\Omega$ and for all $\Lambda\Subset\mathbb{Z}^d$.
    It is worth observing that, since $\mathcal{F}$ is countably generated, $\mu_\Lambda(\;|\eta)$ is unique for all $\eta\in\Omega$ up to a $\mu$-null set \cite[Lem. 10.4.3]{Bogachev_2007}.
    Moreover, since $\mu_\Lambda(\;|\eta)$ is $\mu$-integrable, a density argument shows that, for all $f\in C(\Omega)$,
    \begin{align*}
        \varphi_\mu(f)=\int_\Omega\varphi_\Lambda(f|\eta)\mathrm{d}\mu(\eta)\,,
    \end{align*}
    where $\varphi_\Lambda(\;|\eta)$ denotes the state on $C(\Omega)$ associated with $\mu_\Lambda(\;|\eta)$.
    In particular, if $f,g\in C(\Omega)$ and $g$ is $\mathcal{F}_{\Lambda^c}$-measurable, $\Lambda\Subset\mathbb{Z}_+^d$, then
    \begin{align}\label{Eq: integrating a Lambdac-measurable function}
        \varphi_\mu(fg)=\int_\Omega\varphi_\Lambda(f|\eta)g(\eta)\mathrm{d}\mu(\eta)\,.
    \end{align}
	Finally, it is worth observing that, as a consequence of \cite[Cor. 10.4.10]{Bogachev_2007} $\varphi_\Lambda(\;|\eta)$ is concentrated on $\pi_{\Lambda^c}^{-1}(\eta_{\Lambda^c})$ for all $\Lambda\Subset\mathbb{Z}^d$ and $\mu$-almost all $\eta\in\Omega$.
	Notice that $\pi_{\Lambda^c}^{-1}(\eta_{\Lambda^c})\in\mathcal{F}_{\Lambda^c}$ because $\pi_{\Lambda^c}^{-1}(\eta_{\Lambda^c})=\bigcap_{n\in\mathbb{N}}\pi_{\Lambda_n\setminus\Lambda}^{-1}(\eta_{\Lambda_n\setminus\Lambda^c})$ where $(\Lambda_n)_{n\in\mathbb{N}}$ is an increasing sequence of finite subsets $\Lambda_n\Subset\mathbb{Z}^d$ such that $\bigcup_{n\in\mathbb{N}}\Lambda_n=\mathbb{Z}^d$.
\end{remark}

\section{Classical KMS condition on lattice systems}
\label{Sec: Classical KMS condition on lattice systems}

This section is devoted to describe the classical KMS condition on the lattice system $\Omega=M^{\mathbb{Z}^d}$.
This condition is used to select a particular class of states $\varphi_\mu$ by means of the Poisson structure carried by $M$ \cite{Aizenman_Gallavotti_Goldstein_Lebowitz_1976,Aizenman_Goldstein_Gruber_Lebowitz_Martin_1977,Gallavotti_Pulvirenti_1976,Gallavotti_Verboven_1975}.
This is parallel to the notion of infinite volume Gibbs states \eqref{Eq: oo-volume Gibbs state property} which is instead based on the $\Lambda$-local Gibbs measure \eqref{Eq: Gibbs measure on Lambda with eta boundary conditions}.

To set the stage we recall that, for all $\Lambda\Subset\mathbb{Z}^d$, $\Omega_\Lambda\simeq M^{|\Lambda|}$ is a compact, connected symplectic manifold with $\varsigma_\Lambda:=\oplus_{i\in\Lambda}\varsigma$.
For $f,g\in C^\infty(\Omega_\Lambda)$ we denote by $\{f,g\}_\Lambda$ the associated Poisson bracket.

\begin{remark}\label{Rmk: functional invariant under Hamiltonian flows}
	For later convenience we recall the following property of the state $\varphi_{\Omega_\Lambda}$ on $C(\Omega_\Lambda)\simeq C(M^{|\Lambda|})$ induced by the normalized Liouville measure $\mu_{M^{|\Lambda|}}$ on $M^{|\Lambda|}$.
	Actually, $\varphi_{\Omega_\Lambda}$ is the unique state invariant under all Hamiltonian vector fields.
	In fact, if $\psi_\Lambda$ is a state on $C(\Omega_\Lambda)$ such that
	\begin{align*}
		\psi_\Lambda(\{f,g\}_\Lambda)=0
		\qquad
		\forall f,g\in C^\infty(\Omega_\Lambda)\,,
	\end{align*}
	then $\psi_\Lambda=\varphi_{\Omega_\Lambda}$ \textit{i.e.}
	\begin{align*}
		\psi_\Lambda(f)=\int_{\Omega_\Lambda}f(\omega_\Lambda)\mathrm{d}\mu_{M^{|\Lambda|}}(\omega_\Lambda)\,.
	\end{align*}
	The proof of this result can be found in \cite[Lem. 1]{Aizenman_Goldstein_Gruber_Lebowitz_Martin_1977} as well as in \cite[Cor. 2.7]{Bordermann_Romer_Waldmann_1998}.
	It is based on a localization process in coordinated charts equipped with Darboux coordinates ---so that one can write $\partial_a f=\{f,x^a\}_\Lambda$--- together with the observation that the Lebesgue measure on $\mathbb{R}^d$ is the unique (up to multiplicative constants) measure vanishing on derivatives of compactly supported functions \cite[Lem. 2.6]{Bordermann_Romer_Waldmann_1998}.
\end{remark}

The KMS condition requires to endow $C(\Omega)$ with a Poisson bracket.
We recall that a \textbf{Poisson bracket} $\{\;,\;\}$ on $C(\Omega)$ is a skew-symmetric bilinear map $\{\;,\;\}\colon D\times D\to D$ defined on a dense sub-algebra $D$ of $C(\Omega)$ such that $\{\;,f\}$ is a derivation on $D$ for all $f\in D$ and fulfilling the \textbf{Jacobi identity}
\begin{align*}
		\{\{f,g\},h\}
		=\{\{f,h\},g\}
		+\{f,\{g,h\}\}
		\qquad\forall f,g,h\in D\,.
\end{align*}
In what follows we will consider $D=C^\infty_{\textsc{loc}}(\Omega)$ where $C^\infty_{\textsc{loc}}(\Omega)$ is the algebra of smooth local functions: $f\in C^\infty_{\textsc{loc}}(\Omega$) if there exists $\Lambda\Subset\mathbb{Z}^d$ and $f_\Lambda\in C^\infty(\Omega_\Lambda)$ such that $f(\omega)=f_\Lambda(\omega_\Lambda)$.

To define a Poisson bracket on $C(\Omega)$ we notice that, $\{f,g\}_\Lambda$ makes sense also for $f,g\in C^\infty(\Omega_{\Lambda'})$ for all $\Lambda\subset\Lambda'\Subset\mathbb{Z}^d$.
As a matter of fact, if $\Lambda_1\subset\Lambda_2\Subset\mathbb{Z}^d$, denoting $\pi^2_1\colon\Omega_{\Lambda_2}\to\Omega_{\Lambda_1}$ the projection $\omega_{\Lambda_2}=\omega_{\Lambda_1}\omega_{\Lambda_2\setminus\Lambda_1}\mapsto\omega_{\Lambda_1}$, we find
\begin{align}\label{Eq: compatibility of Poisson brackets}
	\{(\pi^2_1)^*f,g\}_{\Lambda_2}
	=\{(\pi^2_1)^*f,g\}_{\Lambda_1}\,,
\end{align}
for all $f\in C^\infty(\Omega_{\Lambda_1})$ and $g\in C^\infty(\Omega_{\Lambda_2})$ where $(\pi^2_1)^*\colon C(\Omega_{\Lambda_1})\to C(\Omega_{\Lambda_2})$ ---in what follows we will not write $\pi^2_1$ since its use will be clear from the context.
The Poisson bracket $\{\;,\;\}$ on $C(\Omega)$ is defined by
\begin{align}\label{Eq: Poisson bracket on Omega}
	\{f,g\}:=\{f,g\}_{\Lambda}\,,
\end{align}
where $f,g\in C^\infty_{\textsc{loc}}(\Omega)$ while $\Lambda\Subset\mathbb{Z}^d$ is such that $f,g\in C^\infty(\Omega_{\Lambda})$.
On account of Equation \eqref{Eq: compatibility of Poisson brackets} the value of $\{f,g\}_{\Lambda}$ does not change if we enlarge $\Lambda$.
By direct inspection $\{\;,\;\}$ defines a Poisson bracket on $C(\Omega)$.

The next ingredient for stating the KMS condition is the choice of a vector field $X$ on $C(\Omega)$ which plays the role of the infinitesimal generator of the dynamics on $\Omega$.
Notice that, despite this interpretation, the (classical) KMS condition does not rely on the existence of a dynamics integrating $X$.
We now consider the vector field
\begin{align}\label{Eq: local Hamiltonian vector field}
	X^\Phi_{\Lambda}\colon C^\infty_{\textsc{loc}}(\Omega)\to C(\Omega)
        \qquad
	X^\Phi_\Lambda(f)
	:=\{f,H^\Phi_\Lambda\}
        =\sum_{\substack{\Lambda'\Subset\mathbb{Z}^d\\\Lambda'\cap\Lambda\neq\emptyset}}
	\{f,\Phi_{\Lambda'}\}_{\Lambda'}\,,
\end{align}
where the series converges on account of assumption \eqref{Item: Phi is C1 summable}.
Notice that, $\{f,H^\Phi_\Lambda\}$ is only continuous on $\Omega$, moreover, it is not a local function.

By direct inspection we also have, for all $\Lambda''\subseteq\Lambda\Subset\mathbb{Z}^d$ and $f\in C^\infty(\Omega_{\Lambda''})$,
\begin{align}\label{Eq: compatibility of local Hamiltonian vector field}
	\{f,H^\Phi_\Lambda\}
	=\sum_{\substack{\Lambda'\Subset\mathbb{Z}^d\\\Lambda'\cap\Lambda\neq\emptyset}}
	\{f,\Phi_{\Lambda'}\}_{\Lambda'\cap\Lambda''}
	=\sum_{\substack{\Lambda'\Subset\mathbb{Z}^d\\\Lambda'\cap\Lambda''\neq\emptyset}}
	\{f,\Phi_{\Lambda'}\}_{\Lambda'\cap\Lambda''}
	=\{f,H^\Phi_{\Lambda''}\}\,.
\end{align}
Equation \eqref{Eq: compatibility of local Hamiltonian vector field} implies that, if $f$ is $\Lambda''$-local and $\Lambda''\subseteq\Lambda\Subset\mathbb{Z}^d$, then $X^\Phi_\Lambda(f)$ does not change if we enlarge $\Lambda$.
This allows to introduce a global vector field $X^\Phi$ defined by
\begin{align}\label{Eq: global Hamiltonian vector field}
    X^\Phi\colon C^\infty_{\textsc{loc}}(\Omega)\to C(\Omega)
    \qquad
    X^\Phi(f)=\{f,H^\Phi_\Lambda\}
    \qquad
    \forall f\in C^\infty(\Omega_\Lambda)\,.
\end{align}
Loosely speaking, one may think of $X^\Phi$ as the Hamiltonian vector field associated to the formal Hamiltonian $H=\sum_{\Lambda\Subset\mathbb{Z}^d}\Phi_\Lambda$.

We can finally introduce the KMS condition associated with the vector field $X^\Phi$ and the Poisson bracket $\{\;,\;\}$.
The latter condition select a particular subclass of states $\varphi_\mu$ on $C(\Omega)$ with a constraint on the expectation value of Poisson brackets.

In more details, a state $\varphi_\mu$ on $C(\Omega)$ is called \textbf{$X^\Phi$-KMS state} if
\begin{align}\label{Eq: KMS condition}
    \varphi_\mu(\{f,g\})=\varphi_\mu(gX^\Phi(f))
    \qquad
    \forall f,g\in C^\infty_{\textsc{loc}}(\Omega)\,.
\end{align}
We denote by $\mathcal{K}(\Phi)$ the convex set of KMS states.

\section{Proof of the classical DLR-KMS correspondence}
\label{Sec: Proof of the classical DLR-KMS correspondence}

This section is devoted to the proof of Theorem \ref{Thm: equivalence of oo-volume Gibbs states and KMS states} which asserts the equivalence between the DLR condition \eqref{Eq: oo-volume Gibbs state property} and the KMS condition \eqref{Eq: KMS condition}.
The argument profits of the one presented in \cite[Lem. 2]{Aizenman_Goldstein_Gruber_Lebowitz_Martin_1977}.

To lighten the presentation, the proof of Theorem \ref{Thm: equivalence of oo-volume Gibbs states and KMS states} is divided in two propositions.

\begin{proposition}\label{Prop: oo-volume Gibbs states are KMS states}
    It holds $\mathcal{G}(\Phi)\subseteq\mathcal{K}(\Phi)$, namely every infinite volume Gibbs state is also a $X^\Phi$-KMS state. 
\end{proposition}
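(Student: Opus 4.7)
The plan is to reduce the KMS identity for $\varphi_\mu \in \mathcal{G}(\Phi)$ to an integration by parts on each finite volume, and then invoke the DLR condition to integrate the boundary condition $\eta$ out. Fix $f,g \in C^\infty_{\textsc{loc}}(\Omega)$ and choose $\Lambda \Subset \mathbb{Z}^d$ large enough so that both $f$ and $g$ are $\Lambda$-local, i.e.\ they arise from $f_\Lambda,g_\Lambda \in C^\infty(\Omega_\Lambda)$. The DLR property \eqref{Eq: oo-volume Gibbs state property} and Remark \ref{Rmk: existence of regular conditional probability} give
\begin{align*}
	\varphi_\mu(\{f,g\}) = \int_\Omega \varphi^\Phi_\Lambda(\{f,g\}\,|\,\eta)\,\mathrm{d}\mu(\eta),
	\qquad
	\varphi_\mu(gX^\Phi(f)) = \int_\Omega \varphi^\Phi_\Lambda(gX^\Phi(f)\,|\,\eta)\,\mathrm{d}\mu(\eta),
\end{align*}
so it suffices to show that $\varphi^\Phi_\Lambda(\{f,g\}\,|\,\eta) = \varphi^\Phi_\Lambda(gX^\Phi(f)\,|\,\eta)$ for every (and not just $\mu$-a.e.) $\eta \in \Omega$.

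For fixed $\eta$ define $h_\eta \in C^1(\Omega_\Lambda)$ by $h_\eta(\omega_\Lambda) := H^\Phi_\Lambda(\omega_\Lambda\eta_{\Lambda^c})$; the series \eqref{Eq: local Hamiltonian} converges in $C^1(\Omega_\Lambda)$ thanks to assumption \eqref{Item: Phi is C1 summable}, since for any $\Lambda'$ with $\Lambda'\cap\Lambda\neq\emptyset$ only the $\Lambda$-coordinates of $\Phi_{\Lambda'}$ contribute to derivatives in $\omega_\Lambda$. By the same argument, restricting to the slice $\pi_{\Lambda^c}^{-1}(\eta_{\Lambda^c})$ exchanges $X^\Phi(f)$ with the finite-dimensional Hamiltonian field $\{f_\Lambda,h_\eta\}_\Lambda$, namely
\begin{align*}
	X^\Phi(f)(\omega_\Lambda\eta_{\Lambda^c}) = \{f_\Lambda,h_\eta\}_\Lambda(\omega_\Lambda).
\end{align*}

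The core step is a one-line integration by parts. By Remark \ref{Rmk: functional invariant under Hamiltonian flows}, the Liouville state on $C(\Omega_\Lambda)$ annihilates every Poisson bracket, so
\begin{align*}
	0 = \int_{\Omega_\Lambda} \{f_\Lambda,\,g_\Lambda e^{-h_\eta}\}_\Lambda\,\mathrm{d}\mu_{M^{|\Lambda|}}
	= \int_{\Omega_\Lambda}\bigl(\{f_\Lambda,g_\Lambda\}_\Lambda - g_\Lambda\{f_\Lambda,h_\eta\}_\Lambda\bigr)e^{-h_\eta}\,\mathrm{d}\mu_{M^{|\Lambda|}},
\end{align*}
where the Leibniz rule for $\{\,\cdot\,,\cdot\,\}_\Lambda$ and $\{f_\Lambda,e^{-h_\eta}\}_\Lambda = -e^{-h_\eta}\{f_\Lambda,h_\eta\}_\Lambda$ were used. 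Dividing by the partition function $Z^\Phi_{\eta,\Lambda}$ and recognising the right-hand side as $\varphi^\Phi_\Lambda(gX^\Phi(f)\,|\,\eta)$ using the identification above yields the desired equality $\varphi^\Phi_\Lambda(\{f,g\}\,|\,\eta) = \varphi^\Phi_\Lambda(gX^\Phi(f)\,|\,\eta)$.

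I do not expect any serious obstacle here: the Leibniz/Jacobi formalism is at the level of smooth functions on a finite product $M^{|\Lambda|}$, and the Liouville invariance is Remark \ref{Rmk: functional invariant under Hamiltonian flows}. The only mildly delicate point is the interchange of the series defining $H^\Phi_\Lambda$ with the Poisson bracket on the slice, which is justified by the $C^1$-summability \eqref{Eq: Phi is C1 summable} and by the fact that $f_\Lambda$ has derivatives supported in $\Lambda$, so the summation is effectively over $\Lambda'$ with $\Lambda'\cap\Lambda\neq\emptyset$ and converges uniformly in $C^0(\Omega_\Lambda)$.
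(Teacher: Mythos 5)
Your proof is correct and follows essentially the same route as the paper's: reduce via the DLR condition to the local Gibbs state $\varphi^\Phi_\Lambda(\;|\eta)$, apply the Leibniz rule to write $\{f,g\}e^{-H^\Phi_\Lambda}=\{f,ge^{-H^\Phi_\Lambda}\}+g\{f,H^\Phi_\Lambda\}e^{-H^\Phi_\Lambda}$, and kill the exact bracket using the Liouville state's invariance from Remark \ref{Rmk: functional invariant under Hamiltonian flows}. Your explicit justification of the restriction $X^\Phi(f)(\omega_\Lambda\eta_{\Lambda^c})=\{f_\Lambda,h_\eta\}_\Lambda(\omega_\Lambda)$ via the $C^1$-summability \eqref{Eq: Phi is C1 summable} is a detail the paper leaves implicit, but it does not change the argument.
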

\begin{proof}
    Let $\varphi_\mu\in\mathcal{G}(\Phi)$, $\Lambda\Subset\mathbb{Z}^d$ and $f,g\in C^\infty(\Omega_\Lambda)$.
    Equation \eqref{Eq: oo-volume Gibbs state property} implies
    \begin{align*}
        \varphi_\mu(\{f,g\})
        =\int\varphi^\Phi_\Lambda(\{f,g\}_\Lambda|\eta)\mathrm{d}\mu(\eta)\,.
    \end{align*}
    By direct inspection we also have
    \begin{align*}
        \varphi^\Phi_\Lambda(\{f,g\}_\Lambda|\eta)
        &:=\frac{1}{Z^\Phi_{\eta,\Lambda}}\int_{\Omega_\Lambda}
        \{f,g\}_\Lambda(\omega_\Lambda\eta_{\Lambda^c})
	e^{-H^\Phi_\Lambda(\omega_\Lambda\eta_{\Lambda^c})}\mathrm{d}\mu_{M^{|\Lambda|}}(\omega_\Lambda)
        \\
        &=\frac{1}{Z^\Phi_{\eta,\Lambda}}\int_{\Omega_\Lambda}
        \{f,g
	e^{-H^\Phi_\Lambda}\}_\Lambda(\omega_\Lambda\eta_{\Lambda^c})\mathrm{d}\mu_{M^{|\Lambda|}}(\omega_\Lambda)
        \\
        &+\frac{1}{Z^\Phi_{\eta,\Lambda}}\int_{\Omega_\Lambda}
        g\{f,H_\Lambda^\Phi \}_\Lambda(\omega_\Lambda\eta_{\Lambda^c})
	e^{-H^\Phi_\Lambda(\omega_\Lambda\eta_{\Lambda^c})}\mathrm{d}\mu_{M^{|\Lambda|}}(\omega_\Lambda)
        \\
        &=\varphi^\Phi_\Lambda(g\{f,H^\Phi_\Lambda\}_\Lambda|\eta)
        =\varphi^\Phi_\Lambda(gX^\Phi(f)|\eta)\,,
    \end{align*}
    where in the last line we used that $\varphi_{\Omega_\Lambda}(\{h_1,h_2\}_\Lambda)=0$ for all $h_1,h_2\in C^1(\Omega_\Lambda)$.
    Overall we have
    \begin{align*}
        \varphi_\mu(\{f,g\})
        =\int_\Omega\varphi^\Phi_\Lambda(gX^\Phi(f)|\eta)\mathrm{d}\mu(\eta)
        =\varphi_\mu(gX^\Phi(f))\,,
    \end{align*}
    therefore, $\varphi_\mu\in\mathcal{K}(\Phi)$.
\end{proof}

\begin{proposition}\label{Prop: KMS states are oo-volume Gibbs states}
	It holds $\mathcal{K}(\Phi)\subseteq\mathcal{G}(\Phi)$, namely every $X^\Phi$-KMS state is an infinite volume Gibbs state. 
\end{proposition}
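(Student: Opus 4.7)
The strategy is to reverse the argument of Proposition \ref{Prop: oo-volume Gibbs states are KMS states}. Given $\varphi_\mu\in\mathcal{K}(\Phi)$ and $\Lambda\Subset\mathbb{Z}^d$, I would use the regular conditional measure $\mu_\Lambda(\cdot|\eta)$ guaranteed by Remark \ref{Rmk: existence of regular conditional probability} and show that the induced state $\varphi_\Lambda(\cdot|\eta)$ --- which is concentrated on $\pi_{\Lambda^c}^{-1}(\eta_{\Lambda^c})$ and may therefore be read as a state on $C(\Omega_\Lambda)\simeq C(M^{|\Lambda|})$ --- satisfies the finite-dimensional KMS condition with respect to the effective Hamiltonian
\begin{align*}
    \widetilde{H}^\eta_\Lambda(\omega_\Lambda):=H^\Phi_\Lambda(\omega_\Lambda\eta_{\Lambda^c}).
\end{align*}
Uniqueness of the KMS state on the compact symplectic manifold $\Omega_\Lambda$ then forces $\varphi_\Lambda(\cdot|\eta)=\varphi^\Phi_\Lambda(\cdot|\eta)$ for $\mu$-almost every $\eta$, which by Remark \ref{Rmk: existence of regular conditional probability} is equivalent to $\mu\mu^\Phi_\Lambda=\mu$ and hence to $\varphi_\mu\in\mathcal{G}(\Phi)$.

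The localization step proceeds by testing \eqref{Eq: KMS condition} against $f\in C^\infty(\Omega_\Lambda)$ and $g=g_0 h$, with $g_0\in C^\infty(\Omega_\Lambda)$ and $h\in C^\infty(\Omega_{\Lambda'})$ for some $\Lambda'\Subset\Lambda^c$. Since $f$ depends only on the coordinates in $\Lambda$ while $h$ depends only on coordinates in $\Lambda^c$, the Poisson bracket factorizes as $\{f,g_0 h\}=h\{f,g_0\}_\Lambda$, so the KMS condition reduces to $\varphi_\mu(h\{f,g_0\}_\Lambda)=\varphi_\mu(g_0 h X^\Phi(f))$. Applying \eqref{Eq: integrating a Lambdac-measurable function} to the $\mathcal{F}_{\Lambda^c}$-measurable factor $h$ and letting $h$ range over a countable family dense in $L^1(\Omega,\mathcal{F}_{\Lambda^c},\mu)$ --- such a family exists by Stone--Weierstrass on the compact product space $\Omega_{\Lambda^c}$ together with the metrizability of $\Omega$ --- yields, for $\mu$-almost every $\eta$ and all $f,g_0\in C^\infty(\Omega_\Lambda)$,
\begin{align*}
    \varphi_\Lambda(\{f,g_0\}_\Lambda|\eta)=\varphi_\Lambda(g_0 X^\Phi(f)|\eta).
\end{align*}
Because $\varphi_\Lambda(\cdot|\eta)$ is supported on $\pi_{\Lambda^c}^{-1}(\eta_{\Lambda^c})$ and $X^\Phi(f)(\omega_\Lambda\eta_{\Lambda^c})=\{f,\widetilde{H}^\eta_\Lambda\}_\Lambda(\omega_\Lambda)$, this is precisely the KMS identity on $C(\Omega_\Lambda)$ with Hamiltonian $\widetilde{H}^\eta_\Lambda\in C^1(\Omega_\Lambda)$.

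The finite-dimensional uniqueness then follows the lines of \cite[Lem.~2]{Aizenman_Goldstein_Gruber_Lebowitz_Martin_1977} and \cite{Bordermann_Romer_Waldmann_1998}. For any state $\psi$ on $C(\Omega_\Lambda)$ satisfying the KMS condition with a $C^1$ Hamiltonian $\widetilde H$, the Leibniz identity
\begin{align*}
    e^{\widetilde H}\{f,g\}_\Lambda=\{f,e^{\widetilde H}g\}_\Lambda-e^{\widetilde H}g\{f,\widetilde H\}_\Lambda
\end{align*}
combined with KMS forces the rescaled functional $\widetilde\psi(\cdot):=\psi(e^{\widetilde H}\cdot)/\psi(e^{\widetilde H})$ to annihilate every Poisson bracket. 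Remark \ref{Rmk: functional invariant under Hamiltonian flows} then identifies $\widetilde\psi$ with the Liouville state, giving $\psi(f)=Z^{-1}\int_{\Omega_\Lambda} f e^{-\widetilde H}\mathrm{d}\mu_{M^{|\Lambda|}}$ with $Z=\int_{\Omega_\Lambda}e^{-\widetilde H}\mathrm{d}\mu_{M^{|\Lambda|}}$. Specializing to $\widetilde H=\widetilde H^\eta_\Lambda$ recovers $\varphi_\Lambda(\cdot|\eta)=\varphi^\Phi_\Lambda(\cdot|\eta)$ and closes the argument.

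The main difficulty lies in the localization step: \eqref{Eq: KMS condition} is an \emph{integrated} global constraint on $\varphi_\mu$, and sharpening it to a pointwise, $\mu$-almost-sure KMS identity for the conditional states requires both the clean factorization of the Poisson bracket across the $\Lambda/\Lambda^c$ partition and a density argument to separate $\mathcal{F}_{\Lambda^c}$-measurable test functions. Once this is in place, the remaining uniqueness step is an essentially finite-dimensional statement already present in the literature.
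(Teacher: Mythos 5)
Your proposal is correct and follows essentially the same route as the paper's proof: localization of the KMS identity to the conditional states $\varphi_\Lambda(\;|\eta)$ via the factorization $\{f,g_0h\}=h\{f,g_0\}_\Lambda$ and Equation \eqref{Eq: integrating a Lambdac-measurable function}, followed by the exponential rescaling $\psi(e^{\widetilde H}\,\cdot\,)$ and the uniqueness statement of Remark \ref{Rmk: functional invariant under Hamiltonian flows}. The only point the paper spells out that you gloss over is that the exceptional $\mu$-null set must also be made uniform in $f,g_0$, which is done by running the argument over a countable dense subset of $C^\infty(\Omega_\Lambda)$.
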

\begin{proof}
    Using Equation \eqref{Eq: regular conditional measure condition} it is enough to prove that $\varphi_\Lambda(\;|\eta)=\varphi^\Phi_\Lambda(\;|\eta)$ for $\mu$-almost all $\eta\in\Omega$ and all $\Lambda\Subset\mathbb{Z}^d$.
    We recall that for each $f\in C(\Omega)$, the function $\eta\mapsto\varphi_\Lambda(f|\eta)$ is $\mathcal{F}_{\Lambda^c}$-measurable, thus, it only depends on $\eta_{\Lambda^c}$.
    Moreover, $\varphi_\Lambda(\;|\eta)$ is concentrated on $\pi_{\Lambda^c}^{-1}(\eta_{\Lambda^c})$ $\mu$-almost all $\eta\in\Omega$.
	
	Let $\Lambda\Subset\mathbb{Z}^d$, $\Lambda'\Subset\Lambda^c$ and consider $f\in C^\infty(\Omega_\Lambda)$, $h\in C^\infty(\Omega_{\Lambda'})$, $g\in C^\infty_{\textsc{loc}}(\Omega)$.
	Then Equation \eqref{Eq: compatibility of Poisson brackets} implies that
	\begin{align*}
		h\{f,g\}=\{f,gh\}\,,
	\end{align*}
	since $\{f,h\}=0$.
	Moreover, Equation \eqref{Eq: integrating a Lambdac-measurable function} and the KMS condition \eqref{Eq: KMS condition} imply
	\begin{multline}\label{Eq: integrated Lambda KMS condition}
		\int_\Omega\varphi_\Lambda(\{f,g\}|\eta)h(\eta)\mathrm{d}\mu(\eta)
		=\varphi_\mu(h\{f,g\})
		=\varphi_\mu(\{f,gh\})
		\\
		=\varphi_\mu(ghX^\Phi(f))
		=\int_\Omega\varphi_\Lambda(g\{f,H^\Phi_\Lambda\}|\eta)h(\eta)\mathrm{d}\mu(\eta)\,.
	\end{multline}
	The arbitrariness of $h\in C^\infty(\Omega_{\Lambda'})\subset C(\Omega_{\Lambda^c})$ and of $\Lambda'\Subset\Lambda^c$ entails that, by an approximation argument,
    \begin{align*}
        \int_\Omega\big[\varphi_\Lambda(\{f,g\}_\Lambda|\eta)-\varphi_\Lambda(g\{f,H^\Phi_\Lambda\}_\Lambda|\eta)\big]h(\eta)\mathrm{d}\mu(\eta)
        =0
        \qquad\forall h\in C(\Omega_{\Lambda^c})\,.
    \end{align*}
	Since $\eta\mapsto\varphi_\Lambda(f|\eta)$ in $\mathcal{F}_{\Lambda^c}$-measurable, it follows that there exists $N_{f,g}\in\mathcal{F}_{\Lambda^c}\subset\mathcal{F}$ with $\mu(N_{f,g})=0$ such that
    \begin{align}\label{Eq: Lambda KMS condition}
		\varphi_\Lambda(\{f,g\}_\Lambda|\eta)
		=\varphi_\Lambda(g\{f,H^\Phi_\Lambda\}_\Lambda|\eta)
		\qquad\forall\eta\in\Omega\setminus N_{f,g}\,.
    \end{align}
    To proceed, we need to cope with the $f,g$-dependence of $N_{f,g}$.
    However, since $C(\Omega_\Lambda)$ is separable, we may choose $N_{f,g}$ independently on $f,g$: Indeed, it suffices to consider Equation \eqref{Eq: Lambda KMS condition} for $f,g$ on a countable dense set of $C^\infty(\Omega_\Lambda)$.
    This leads to countably many $\mu$-null sets whose union leads to the $\mu$-null set $N$ of interest.
	
	We now prove that condition \eqref{Eq: Lambda KMS condition} implies that $\varphi_\Lambda(\;|\eta)=\varphi^\Phi_\Lambda(\;|\eta)$ for all $\eta\in\Omega\setminus N$.
	For each $\eta\in\Omega\setminus N$ we consider the linear positive functional $\psi_\Lambda$ on $C(\Omega_\Lambda)$ defined by
	\begin{align*}
		\psi_\Lambda(f)
		:=\varphi_\Lambda(fe^{H^\Phi_\Lambda}|\eta)
		\qquad
		\forall f\in C(\Omega_\Lambda)\,.
	\end{align*}
	By direct inspection we have, for all $f,g\in C^\infty(\Omega_\Lambda)$,
	\begin{multline*}
		\psi_\Lambda(\{f,g\}_\Lambda)
		=\varphi_\Lambda(\{f,g\}_\Lambda e^{H^\Phi_\Lambda})
		\\
		=\varphi_\Lambda(\{f,ge^{H^\Phi_\Lambda}\}_\Lambda)
		-\varphi_\Lambda(g\{f,H^\Phi_\Lambda\}_\Lambda e^{H^\Phi_\Lambda})
		=0\,,
	\end{multline*}
	where we applied Equation \eqref{Eq: Lambda KMS condition}.
	
	Thus, $\psi_\Lambda$ is a linear, positive functional on $C(\Omega_\Lambda)\simeq C(M^{|\Lambda|})$ which is invariant under all Hamiltonian vector fields.
	Remark \ref{Rmk: functional invariant under Hamiltonian flows} implies that $\widehat{\psi}_\Lambda=\varphi_{\Omega_\Lambda}$ where $\widehat{\psi}_\Lambda(f)=\psi_\Lambda(f)/\psi_\Lambda(1)$.
	Finally, since $\varphi_\Lambda(\;|\eta)$ is concentrated on $\pi_{\Lambda^c}^{-1}(\eta_{\Lambda^c})$ we have
        \begin{align*}
	      \varphi_\Lambda(f|\eta)
            =\widehat{\psi}_\Lambda(f_\eta e^{-H^\Phi_{\Lambda,\eta}})
            =\varphi^\Phi_\Lambda(f|\eta)\,,
	\end{align*}
	where $f_\eta e^{-H^\Phi_{\Lambda,\eta}}\in C(\Omega_\Lambda)$ is defined by $(f_\eta e^{-H^\Phi_{\Lambda,\eta}})(\omega_\Lambda):=(fe^{-H^\Phi_\Lambda})(\omega_\Lambda\eta_{\Lambda^c})$.
\end{proof}

\paragraph{Acknowledgements.}
We are grateful to prof. Aernout van Enter for his suggestions and helpful discussions on this project. We also thank Klaas Landsman, Rodrigo Bissacot, Teun van Nuland, Y. Velenik and Stefan Waldmann for their fruitful comments.
C. J. F. van de Ven is supported by a postdoctoral fellowship granted by the Alexander von Humboldt Foundation (Germany).
N.D. acknowledges the support of the GNFM-INdAM Progetto Giovani \textit{Non-linear sigma models and the Lorentzian Wetterich equation}.

\paragraph{Data availability statement.}
Data sharing is not applicable to this article as no new data were created or analysed in this study.

\paragraph{Conflict of interest statement.}
The authors certify that they have no affiliations with or involvement in any
organization or entity with any financial interest or non-financial interest in
the subject matter discussed in this manuscript.


\begin{thebibliography}{99}
	\bibitem{Aizenman_Gallavotti_Goldstein_Lebowitz_1976}
	Aizenman M., Gallavotti G., Goldstein S., Lebowitz J.L.,
	\textit{Stability and equilibrium states of infinite classical systems},
	Commun.Math. Phys. 48, 1-14 
	\href{https://doi.org/10.1007/BF01609407}
	{(1976).}
	
	\bibitem{Aizenman_Goldstein_Gruber_Lebowitz_Martin_1977}
	Aizenman M., Goldstein S., Gruber C., Lebowitz J.L., Martin P.,
	\textit{On the equivalence between KMS-states and equilibrium states for classical systems},
	Commun. Math. Phys. 53, 209-220
	\href{https://doi.org/10.1007/BF01609847}
	{(1977).}
	
	\bibitem{Bayen_Flato_Fronsdal_Lichnerowicz_1978_I}
	Bayen F., Flato M., Fronsdal C., Lichnerowicz A., Sternheimer D.,
	\textit{Deformation Theory and Quantization. 1.Deformations of Symplectic Structures}, 
	Annals Phys. 111, 61 
	\href{https://doi.org/10.1016/0003-4916(78)90224-5}
	{(1978).}
	
	\bibitem{Bayen_Flato_Fronsdal_Lichnerowicz_1978_II}
	Bayen F., Flato M., Fronsdal C., Lichnerowicz A., Sternheimer D.,
	\textit{Deformation Theory and Quantization. 2. Physical Applications},
	Annals Phys. 111-151 
	\href{https://doi.org/10.1016/0003-4916(78)90225-7}
	{(1978).}
		
	\bibitem{Basart_Flato_Lichnerowicz_Sternheimer_1984}
	Basart H., Flato M., Lichnerowicz A., Sternheimer D.,
	\textit{Deformation theory applied to quantization and statistical mechanics},
	Lett Math Phys 8, 483-494
	\href{https://doi.org/10.1007/BF00400978}
	{(1984).}
	

	\bibitem{Berezin_1975}
	Berezin F.A.,
	\textit{General concept of quantization},
	Commun. Math. Phys. 40, 153-174 
	\href{https://doi.org/10.1007/BF01609397}
	{(1975).}


	\bibitem{Bogachev_2007}
	Bogachev V.I.,
	\textit{Measure Theory Vol. I-II},
	Springer Berlin, Heidelberg 
	\href{https://doi.org/10.1007/978-3-540-34514-5}
	{(2007)}.

	\bibitem{Bordermann_Romer_Waldmann_1998}
	Bordermann M., R\"{o}mer H., Waldmann S.,
	\textit{A Remark on Formal KMS States in Deformation Quantization},
	Letters in Mathematical Physics, Vol. 45, 49-61 
	\href{https://doi.org/10.1023/A:1007481019610}
	{(1998).}
	
	\bibitem{Bratteli_Robinson_1981_II}
	Bratteli O., Robinson D.W.,
	\textit{Operator Algebras and Quantum Statistical Mechanics. Vol. II: Equilibrium States, Models in Statistical Mechanics},
	Springer Berlin, Heidelberg
	\href{https://doi.org/10.1007/978-3-662-09089-3}
	{(1981).}


    
    
    \bibitem{Dobrushin_1968_1}
	Dobrushin R.L.,
	\textit{The description of a random field by means of conditional probabilities and conditions of its regularity},
	Theor. Prob. Appl. 13, 197-224
	\href{https://doi.org/10.1137/1113026}
	{(1968)}.
	
	\bibitem{Dobrushin_1968_2}
	Dobrushin R.L.,
	\textit{Gibbsian random fields for lattice systems with pairwise interactions},
	Funct. Anal. Appl. 2, 292-301
	\href{https://doi.org/10.1007/BF01075681}
	{(1968).}
	
	\bibitem{Dobrushin_1968_3}
	Dobrushin R.L.,
	\textit{The problem of uniqueness of a Gibbs random field and the problem of phase transition},
	Funct. Anal. Appl. 2, 302-312
	\href{https://doi.org/10.1007/BF01075682}
	{(1968).}
	
	\bibitem{Dobrushin_1970}
	Dobrushin R.L.,
	\textit{Prescribing a system of random variables by conditional distributions},
	Th. Prob. Appl. 17, 582-600
	\href{https://doi.org/10.1137/1115049}
	{(1970).}
	
	\bibitem{Drago_vandeVen_2022}
	Drago N., van de Ven  C.\,J.\,F.,
	\textit{Strict deformation quantization and local spin interactions},
	arXiv:2210.10697 [math-ph]
	\href{https://doi.org/10.48550/arXiv.2210.10697}
	{(2022).}
	
	\bibitem{Drago_Waldmann_2021}
	Drago N., Waldmann S.,
	\textit{Classical KMS Functionals and Phase Transitions in Poisson Geometry},
	arXiv:2107.04399 [math-ph]
	\href{https://doi.org/10.48550/arXiv.2107.04399}
	{(2021).}
	
	\bibitem{Friedly_Velenik_2017}
	Friedli S., Velenik Y.,
	\textit{Statistical mechanics of lattice systems: A concrete mathematical introduction},
	Cambridge University Press
	\href{https://doi.org/10.1017/9781316882603}
	{(2017).}
	
	\bibitem{Gallavotti_Pulvirenti_1976}
	Gallavotti G., Pulvirenti M.,
	\textit{Classical KMS Condition and Tomita-Takesaki Theory},
	Comm. Math. Phys. 46, Number 1, 1-9
	\href{https://doi.org/10.1007/BF01610495}
	{(1976).}

	\bibitem{Gallavotti_Verboven_1975}
	Gallavotti G., Verboven E.,
	\textit{On the classical KMS boundary condition},
	Nuov Cim B, 28, 274-286
	\href{https://doi.org/10.1007/BF02722820}
	{(1975).}

     \bibitem{Georgii_2011}
	Georgii, H.-O.
	\textit{Gibbs Measures and Phase Transitions},
	Berlin, New York: De Gruyter
	\href{https://doi.org/10.1515/9783110250329}
	{(2011).}
  
 
	\bibitem{Haag_Hugenholtz_Winnik_1967}
	Haag R., Hugenholtz N. M., Winnink M.,
	\textit{On the equilibrium states in quantum statistical mechanics},
	Comm. Math. Phys. 5:215-236
	\href{https://doi.org/10.1007/BF01646342}
	{(1967).}
	
	\bibitem{Kallenberg_2021}
	Kallenberg O.,
	\textit{Foundations of Modern Probability},
	Springer Nature Switzerland AG
	\href{https://doi.org/10.1007/978-3-030-61871-1}
	{(2021)}.
	    
	
	\bibitem{Lanford_Ruelle_1969}
	Lanford O.E., Ruelle. D.,
	\textit{Observables at infinity and states with short range correlations in statistical mechanics},
	Comm. Math. Phys., 13: 194-215,
	\href{https://doi.org/10.1007/BF01645487}
	{(1969).}
	
	\bibitem{Moretti_vandeVen_2022}
	Moretti V., van de Ven C.\,J.\,F.,
	\textit{The classical limit of Schr\"{o}dinger operators in the framework of Berezin quantization and spontaneous symmetry breaking as emergent phenomenon},
	International Journal of Geometric Methods in Modern Physics, Vol. 19, Iss. 01
	\href{https://doi.org/10.1142/S0219887822500037}{(2022)}.
	
	\bibitem{Murro-vandeVen_2022}
	Murro S., van de Ven  C.\,J.\,F.,
	\textit{Injective tensor products in strict deformation quantization}
	Math Phys Anal Geom 25, 2
	\href{https://doi.org/10.1007/s11040-021-09414-1}{(2022)}.

 
	
	\bibitem{Ruelle_1967}
	Ruelle D.,
	\textit{A variational formulation of equilibrium statistical mechanics and the Gibbs phase rule},
	Comm. Math. Phys. 5, 324-329
	\href{https://doi.org/10.1007/BF01646446}
	{(1967).}
	
	\bibitem{Ruelle_1969}
	Ruelle D.,
	\textit{Statistical mechanics: rigorous results},
	New York: Benjamin
	\href{https://doi.org/10.1142/4090}
	{(1969).}
	
	\bibitem{vandeVen_2022}
	van de Ven  C.\,J.\,F.,
	\textit{The classical limit and spontaneous symmetry breaking in algebraic quantum theory},
	Expositiones Mathematicae Vol. 40, Iss. 3, 543-571
	\href{https://doi.org/10.1016/j.exmath.2022.02.002}
	{(2022).}
	
	\bibitem{vandeVen_2022_b}
	van de Ven C.\,J.\,F.,
	\textit{KMS states and their classical limit},
	arXiv:2211.01755 [math-ph]
	\href{https://doi.org/10.48550/arXiv.2211.01755}
	{(2022).}
	
\end{thebibliography}
\end{document}